%% file: main.tex
\documentclass[12pt]{article}
\usepackage{graphicx}
\usepackage{tikz}
\usepackage{amsmath,amsthm,amssymb,mathrsfs,mathtools,bm}
\usepackage[OT1]{fontenc} 
\usepackage{etoc}
\usepackage{float}
\usepackage{titlesec}
\usepackage{multirow,array}
\usepackage{diagbox}
\usepackage{fancyhdr}

\titleformat{\section}
		{\scshape \large \bfseries\center}
         {\thesection}
        {0.5em}
        {}
        []
\titleformat{\subsection}
        {\normalfont\bfseries}
        {\thesubsection}
        {0.5em}
        {}
        []

\allowdisplaybreaks

\def\subsectiontitle{}
\def\subsubsectiontitle{}
\usepackage{comment}
\usepackage[backgroundcolor=orange!50!white]{todonotes}
\usepackage{tikz}
\usetikzlibrary{automata, positioning}

\usepackage{kpfonts}
\usepackage{inconsolata}

\usepackage{makecell}
\usepackage{cellspace}
\usepackage{enumitem}
\usepackage[margin=1in]{geometry}
\usepackage[colorlinks,breaklinks=true]{hyperref}
\usepackage{breakcites}

\usepackage{xcolor}
\AtBeginDocument{%
	\hypersetup{%
    linkcolor={red!50!black},%
    citecolor={blue!50!black},%
    urlcolor={blue!80!black}%
}%
}%

\usepackage[nameinlink,noabbrev,sort,capitalise]{cleveref}
\makeatletter
\def\ps@pprintTitle{%
 \let\@oddhead\@empty
 \let\@evenhead\@empty
 \def\@oddfoot{\emph{Very preliminary version}\hfill\emph{This draft: \today}}%
 \let\@evenfoot\@oddfoot}
\makeatother
\usepackage{etoolbox}
\patchcmd{\pprintMaketitle}
  {\fi\hrule}
  {\fi\ifvoid\extrainfobox\else\unvbox\extrainfobox\par\vskip10pt\fi\hrule}
  {}{}

\newsavebox\extrainfobox

\newtheorem{proposition}{Proposition}
\crefname{proposition}{Proposition}{Propositions}

\newtheorem{theorem}{Theorem}
\crefname{theorem}{Theorem}{Theorems}

\crefname{corollary}{Corollary}{Corollaries}

\newtheorem{lemma}{Lemma}
\crefname{lemma}{Lemma}{Lemmas}

\crefname{assumption}{Assumption}{Assumptions}

\crefname{axiom}{Axiom}{Axioms}

\newtheorem{definition}{Definition}
\crefname{definition}{Definition}{Definitions}

\theoremstyle{remark}

\theoremstyle{definition}
\newtheorem{example}{Example}
\crefname{example}{Example}{Examples}

\crefname{problem}{Problem}{Problems}

\theoremstyle{claim}

\crefname{claim}{Claim}{Claimss}

\usepackage{indentfirst}

\allowdisplaybreaks

\let\oldfootnote\footnote
\renewcommand\footnote[1]{\oldfootnote{\hspace{.4mm}#1}}

\makeatletter
\renewenvironment{proof}[1][\proofname] {\par\pushQED{\qed}\normalfont\topsep6\p@\@plus6\p@\relax\trivlist\item[\hskip\labelsep\bfseries#1\@addpunct{.}]\ignorespaces}{\popQED\endtrivlist\@endpefalse}
\makeatother

\let\oldFootnote\footnote
\newcommand\nextToken\relax

\renewcommand\footnote[1]{%
    \oldFootnote{#1}\futurelet\nextToken\isFootnote}

\newcommand\isFootnote{%
    \ifx\footnote\nextToken\textsuperscript{,}\fi}

\usepackage[american]{babel}
\usepackage[authoryear,round]{natbib}
\usepackage{todonotes}

\usepackage{amstext} 
\usepackage{array}   
\newcolumntype{L}{>{$}l<{$}} 

\begin{document}

\title{The Impossibility of a Gerrymander-Proof Representative Democracy}

\date{This version: \today}

\author{\makebox[.25\linewidth]{{John Mori}\thanks{University of Chicago Department of Economics; email: \protect\texttt{johnmori@uchicago.edu}. I am grateful to Ben Brooks, Phil Reny, and Joe Root for guidance and feedback. I also thank Chris Chambers and the audience at the theory workshop at the University of Chicago for helpful discussion.}}}

\maketitle

\input{draft/abstract}
\input{draft/intro}

\input{draft/literature}
\input{draft/certain}
\input{draft/uncertain}

\input{draft/independence}

\small 
\setlength{\bibsep}{0pt}
\bibliography{draft/bibliography}

\appendix
\section{Proofs}
\input{draft/proof_certain}
\input{draft/proof_uncertain}

\end{document}

%% file: draft/abstract.tex
\begin{abstract}
    A representative democracy is immune to gerrymandering if it satisfies \citeauthor{chambers2008consistent}' (\citeyear{chambers2008consistent}) \emph{representative consistency}. We examine preference aggregation and show that representative consistency is mutually inconsistent with three other normative desiderata --- efficiency, anonymity, and neutrality. We show this impossibility result both in a setting with ordinal preferences and a setting with expected utility preferences.
\end{abstract}

%% file: draft/intro.tex
\section{Introduction}

A representative democracy is susceptible to \emph{gerrymandering} if political parties manipulate district boundaries to influence the outcomes of elections. Some voting systems are more susceptible to gerrymandering than others. For instance, single-member, winner-takes-all systems are highly susceptible to gerrymandering, where in a two-party system, gerrymandering can nearly double the representation of a particular party. On the other hand, multiple-member, proportional representation systems alleviates the influence of gerrymandering. Are there systems that are \emph{gerrymander-proof}, i.e. immune to gerrymandering?


We show that in a preference aggregation framework, the answer is negative --- there does not exist a social welfare function that is immune to gerrymandering and satisfies other normative principles of preference aggregation --- efficiency, anonymity, and neutrality.

\cite{chambers2008consistent} proposes a notion of gerrymander-proofness called \emph{representative consistency}. In our preference aggregation framework, a representative consistent social welfare function is invariant to the partitioning of individuals in a two-stage aggregation process: individuals are partitioned into blocks, akin to how voters are divided into districts. First, the social welfare function aggregates preferences in each block, as voters in each district elect representatives for their district. Second, the social welfare function aggregates the population-weighted aggregated preferences into social preferences, as representatives determine social policy in a legislature or elect a head of government. Representative consistency requires that the resulting social preference of this two-stage process is independent of the initial partitioning of individuals. In other words, gerrymandering district boundaries does not change the eventual policies of a government.

The other axioms we consider are pillars of preference aggregation. An efficient social welfare function is responsive to agreements in individual preferences --- if every individual prefers an alternative to another, society should as well. An anonymous social welfare function cannot weigh the preferences of some individual more heavily than that of another individual. A neutral social welfare function cannot favor an alternative over another in the aggregation procedure itself.

We present the impossibility theorem both in an Arrovian setting, where individual have ordinal preferences over a set of alternatives, and in a Harsanyi setting, where individuals have expected utility preferences over lotteries over alternatives. We then demonstrate the independence of our axioms, partially in the ordinal setting, and fully in the setting with uncertainty.

%% file: draft/literature.tex
\textbf{Related Literature.} \cite{chambers2008consistent} introduces the axiom representative consistency (while noting precedence in the literature discussed below.) Chambers examines a setting where each individual votes for one alternative among a finite set of alternatives, and a \emph{rule} maps a profile of votes to a single vote. He shows that any rule that satisfies unanimity, anonymity, and representative consistency must be a \emph{partial priority rule}, which corresponds to a meet-semilattice on the set of alternatives and for any profile of votes, takes the meet of the votes in the profile. \cite{chambers2009axiomatic} deems this prior result a impossibility theorem, and provides a resolution by considering infinite sets of alternatives. We show that the impossibility returns in the preference aggregation setting when efficiency and neutrality axioms are added, even when the set of alternatives is infinite.

Representative consistency is closely related to the associativity axiom in the characterization of the quasi-arithmetic mean by \cite{kolmogorov1930notion} and \cite{Nagumo1930}. One can view the exercise of preference aggregation as taking the "mean" of preferences, and naturally examine existing properties of means. We demonstrate that representative consistency, the analogue of associativity, is inconsistent with fundamental axioms in preference aggregation.

\cite{blackorby1984social} introduce a condition they call "the population substitution principle," which is the analogue of the associativity axiom in their setting of ordering vectors of utility. There, utilities are interpersonally comparable, and the population substitution is used to characterize several utilitarian orders. Utilities are not interpersonally comparable in our setting of preference aggregation.

There are other impossibility theorems for gerrymandering. \cite{puppe2015axiomatic} and \cite{alexeev2018impossibility} both examine voting in a two-party system and they include geographical constraints, unlike our abstract setting of preference aggregation.

%% file: draft/certain.tex
\section{Ordinal Preferences}
Let $I$ be a set of individuals, and it can be either finite or infinite. Let $\mathcal{I}$ be the set of nonempty subsets of $I$. 

Let $X$ be a set of outcomes, and it can be either finite or infinite. Each individual $i \in I$ has preferences $\succsim_i \subset X \times X$ that are complete and transitive. Let $\mathcal{R}$ be the set of preferences.

To study representative consistency, our social welfare function takes as input not only preference profiles of the entire population, but also preference profiles of any subset of the population. For a subpopulation $J \in \mathcal{I}$, let $\succsim_J$ denote a generic preference profile for $J$, and let $\mathcal{R}^J$ denote the set of preference profiles for $J$. Thus a generic social welfare function is $f: \bigcup_{J \in \mathcal{I}} \mathcal{R}^J \rightarrow \mathcal{R}$.  Let $\succsim_{f(\succsim_J)}$ and $\succ_{f(\succsim_J)}$ denote the weak and strict parts of $f(\succsim_J)$.

For preferences $\succsim \in \mathcal{R}$ and subpopulation $J \in \mathcal{I}$, let $\succsim^J$ denote a preference profile where each individual $i \in J$ has preferences $\succsim$. We can now state our main axiom of study, introduced by \cite{chambers2008consistent}:
\begin{definition}
    Representative consistency. For any $J \in \mathcal{I}$, any $\succsim_J \in \mathcal{R}^J$, and partition $\{K_1, \dots, K_m\}$ of $J$, $f(\succsim_J) = f\left(f(\succsim_{K_1})^{K_1}, \dots, f(\succsim_{K_m})^{K_m}\right)$.
\end{definition}

The next axiom is our notion of efficiency in this section, the standard strong Pareto axiom:

\begin{definition}
    Strong Pareto. For any $J \in \mathcal{I}$, any $\succsim_J \in \mathcal{R}^J$, and any $x,y \in X$, i) if $x \succsim_i y$ for all $i \in J$, then $x \succsim_{f(\succsim_J)} y$, and ii) if additionally there exists an $i \in J$ for whom $P \succ_i Q$, then $P \succ_{f(\succsim_J)} Q$.
\end{definition}

Since the social welfare function can take as input preference profiles of subpopulations, anonymity requires that the social welfare function is independent of the specific subpopulation whose preferences are being aggregated:
\begin{definition}
    Anonymity. For any $J, K \in \mathcal{I}$, any bijection $\sigma: J \rightarrow K$, and any $\succsim_J \in \mathcal{R}^J$, $\succsim_{K}' \in \mathcal{R}^K$ for which $\succsim_i = \succsim_{\sigma(i)}'$ for all $i \in J$, $f(\succsim_J) = f(\succsim_K')$.
\end{definition}


Neutrality requires that the social welfare function does not favor any alternative in the aggregation procedure itself. For any permutation $\rho: X \rightarrow X$ and preferences $\succsim \in \mathcal{R}$, let $\rho(\succsim)$ be preferences such that for any $x, y \in X$, $\rho(x0 \rho(\succsim)\rho(y) \iff x \succsim y$. For any subpopulation $J \in \mathcal{I}$ and preference profile $\succsim_J \in \mathcal{R}^{J}$, let $\rho(\succsim_J) = (\rho(\succsim_{j_1}), \dots, \rho(\succsim_{j_n}))$ for $J = \{j_1, \dots, j_n\}$. 

\begin{definition}
    Neutrality. For any $J \in \mathcal{I}$, any  $\succsim_J \in \mathcal{R}^{J}$, and any permutation $\rho: X \rightarrow X$, $f(\rho(\succsim_J)) = \rho(f(\succsim_J))$
\end{definition}

\begin{theorem} \label{thm1}
    Suppose $|I| \geq 3$ and $|X| \geq 2$. There is no social welfare function that satisfies representative consistency, strong Pareto, anonymity, and neutrality.
\end{theorem}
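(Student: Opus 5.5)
The plan is to reduce everything to a pair of alternatives $x,y$ and three individuals, say $1,2,3\in I$ (anonymity lets us use any three). Fix a strict order $\succ^0$ on $X\setminus\{x,y\}$. Consider only preferences that rank $x$ and $y$ above every other alternative and agree with $\succ^0$ on $X\setminus\{x,y\}$. There are exactly three such preferences, according to how $x$ and $y$ compare: $P$ (with $x\succ y$), $\bar P$ (with $y\succ x$) and $E$ (with $x\sim y$). For any profile drawn from $\{P,\bar P,E\}$, every individual agrees on all pairs other than $\{x,y\}$, and these agreements are strict. Strong Pareto therefore forces $f$ to coincide with $\succ^0$ on $X\setminus\{x,y\}$ and to put $x,y$ strictly above everything else. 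Completeness and transitivity then imply that $f$ of such a profile is again one of $P,\bar P,E$. So on this domain $f$ is a rule mapping profiles over $\{P,\bar P,E\}$ to a single element of $\{P,\bar P,E\}$. This is what lets us feed outputs of $f$ back into $f$ when applying representative consistency.

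\textbf{Step 1 (a tie between opposite voters).} Take $\rho$ to be the transposition of $x$ and $y$; it fixes every other alternative, so $\rho(P)=\bar P$, $\rho(\bar P)=P$ and $\rho(E)=E$. Neutrality and then anonymity (swapping the two individuals) give
\[
\rho\bigl(f(P,\bar P)\bigr)=f(\bar P,P)=f(P,\bar P).
\]
The only $\rho$-invariant element of $\{P,\bar P,E\}$ is $E$, so $f(P,\bar P)=E$.

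\textbf{Step 2 (two against one).} Apply representative consistency to the profile $(P,\bar P,P)$ of individuals $1,2,3$ with the partition $\{1,2\},\{3\}$. By Step 1 this profile is replaced by $(E,E,P)$. In $(E,E,P)$ everyone weakly prefers $x$ to $y$ and individual 3 strictly prefers it, so strong Pareto gives $f(P,\bar P,P)=P$. Anonymity then gives $f$ of any arrangement of two $P$'s and one $\bar P$ equal to $P$.

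\textbf{Step 3 (the contradiction).} Consider the profile $(P,E,\bar P)$.
\begin{itemize}
\item With the partition $\{1,2\},\{3\}$: strong Pareto gives $f(P,E)=P$, so the profile becomes $(P,P,\bar P)$, whose image is $P$ by Step 2.
\item With the partition $\{1,3\},\{2\}$: Step 1 gives $f(P,\bar P)=E$, so the profile becomes $(E,E,E)$, whose image is $E$ by (weak) Pareto in both directions.
\end{itemize}
Representative consistency requires both to equal $f(P,E,\bar P)$, which is impossible since $P\neq E$.

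The main obstacle is the bookkeeping described in the first paragraph. Representative consistency substitutes $f(\succsim_K)$ for the block's preferences, so each block output must again be a preference in $\{P,\bar P,E\}$ for Steps 2 and 3 to apply. This is exactly why $x$ and $y$ are placed at the top over a common order, and why strong Pareto (rather than only weak Pareto) is needed to pin down every pair outside $\{x,y\}$. Only $|X|\ge 2$ and $|I|\ge3$ are used.
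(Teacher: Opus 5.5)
Your proof is correct, but for $|X|\ge 3$ it takes a different route from the paper. The paper splits into two cases.

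\textbf{Case $|X|=2$.} Here the paper runs essentially your Steps 1--3. Its chain $f(x,y,\sim)=f(x,y,x)=f(x,\sim,\sim)=x$, set against $f(x,y,\sim)=f(\sim,\sim,\sim)=\sim$, is your Step 3 with Step 2 inlined.

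\textbf{Case $|X|\ge 3$.} Here the paper gives a separate three-alternative argument starting from a Condorcet-cycle profile. Neutrality under the cyclic permutation, together with anonymity, forces a three-way tie. Strong representative consistency and strong Pareto then force the aggregate of two of the voters to be the reversed ranking $z\succ y\succ x$. A second splitting of the resulting profile gives the contradiction.

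\textbf{Your route.} You instead observe that the preferences placing $x,y$ strictly above a fixed common ranking of $X\setminus\{x,y\}$ form a three-element domain closed under $f$. The two-alternative argument therefore transplants verbatim to any $X$. This gives one uniform, shorter proof and shows that the ordinal impossibility is at heart a two-alternative phenomenon.

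\textbf{What the paper's route buys.} The cyclic argument is the template the paper reuses for Theorem~\ref{thm2}. There the two-alternative case is actually consistent with the axioms (the indifference priority rule), so a reduction like yours is unavailable.

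\textbf{Two small corrections to your commentary.} First, given a strict $\succ^0$, the pairs outside $\{x,y\}$ are already pinned down by weak Pareto. Strong Pareto is genuinely needed only for $f(P,E)=P$ and $f(E,E,P)=P$. Second, you could instead make all of $X\setminus\{x,y\}$ mutually indifferent, with part (i) of strong Pareto applied in both directions fixing those pairs. This avoids choosing a linear order on a possibly infinite set.
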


We assume $|I| \geq 3$ for representative consistency to be independent from strong Pareto. Note that for any preferences $\succsim \in R$, $f(\succsim) = \succsim$ and $f(\succsim, \succsim) = \succsim$ by strong Pareto. And so if $|I| = 2$, representative consistency is implied by strong Pareto.

%% file: draft/uncertain.tex
\section{Expected Utility Preferences}
At this point one might think that the impossibility is due to the fact that ordinal preferences over certain outcomes are not "fine-grained" enough to accommodate representative consistency. After all, the quasi-arithmetic means --- which satisfy associativity --- are characterized for real numbers. Suppose we instead considered "cardinal" preferences that can be represented by real-valued utilities up to positive affine transformations, such as expected utility preferences. 

In this section we show that the impossibility remains even when we consider expected utility preferences.

Denote $\Delta(X)$ the set of probability distributions with finite support, or lotteries, over $X$. We denote the degenerate lottery on $x$ as $x$. Each individual $i \in I$ has preferences $\succsim_i$ over $\Delta(X)$. We assume that each individual's preferences admit a von Neumann-Morgenstern (vNM) expected utility representation --- say that $\succsim_i$ is \emph{vNM} if there exists a utility index $u_i: X \rightarrow \mathbb{R}$ such that for any lottery $P \in \Delta(X)$, the \emph{EU representation} $\sum_{x \in \text{supp}(P)}u_i(x)P(x)$ represents $\succsim_i$. Let $\mathcal{V}$ denote the set of vNM preferences on $\Delta(X)$.

Then, a generic social welfare function for vNM preferences is $\phi: \bigcup_{J \in \mathcal{I}} \mathcal{V}^J \rightarrow \mathcal{V}$.

In this setting we can weaken the strong Pareto axiom:

\begin{definition}
    Pareto Principle. For any $J \in \mathcal{I}$, any $\succsim_J \in \mathcal{V}^J$, and any $P, Q \in \Delta(X)$, i) if $P \sim_i Q$ for all $i \in J$, then $P \sim_{\phi(\succsim_J)} Q$, and ii) if $P \succ_i Q$ for all $i \in J$, then $P \succ_{\phi(\succsim_J)} Q$.
\end{definition}

The Pareto principle comprises two Pareto axioms, namely i) Pareto indifference and ii) weak Pareto.

Let's state the other axioms in the vNM setting:

\begin{definition}
    Representative consistency. For any $J \in \mathcal{I}$, any $\succsim_J \in \mathcal{V}^J$, and partition $\{K_1, \dots, K_m\}$ of $J$, $\phi(\succsim_J) = \phi\left(\phi(\succsim_{K_1})^{K_1}, \dots, \phi(\succsim_{K_m})^{K_m}\right)$.
\end{definition}

\begin{definition}
    Anonymity. For any $J, K\in \mathcal{I}$, any bijection $\sigma: J \rightarrow K$, and any $\succsim_J \in \mathcal{V}^J$, $\succsim_{K}' \in \mathcal{V}^{K}$ for which $\succsim_i = \succsim_{\sigma(i)}'$ for all $i \in J$, $\phi(\succsim_J) = \phi(\succsim_K')$.
\end{definition}

For a permutation $\rho: X \rightarrow X$ and lottery $P \in \Delta(X)$, denote by $\rho(P)$ the lottery which assigns $\rho(P)(\rho(x)) = P(x)$. For any preferences $\succsim \in \mathcal{V}$, let $\rho(\succsim)$ be preferences such that for any $P, Q \in X$, $\rho(P) \rho(\succsim)\rho(Q) \iff P \succsim Q$. For any subpopulation $J \in \mathcal{I}$ and preference profile $\succsim_J \in \mathcal{R}^{J}$, let $\rho(\succsim_J) = (\rho(\succsim_{j_1}), \dots, \rho(\succsim_{j_n}))$ for $J = \{j_1, \dots, j_n\}$. 

\begin{definition}
    Neutrality. For any $J \in \mathcal{I}$, any  $\succsim_J \in \mathcal{V}^{J}$, any permutation $\rho: X \rightarrow X$, $\phi(\rho(\succsim_J)) = \rho(\phi(\succsim_J))$
\end{definition}

\begin{theorem} \label{thm2}
    Suppose $|I| \geq 3$ and $|X| \geq 3$. There is no social welfare function that satisfies representative consistency, the Pareto principle, anonymity, and neutrality.
\end{theorem}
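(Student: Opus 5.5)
The plan is to pin down the social preference on a few carefully chosen profiles and then derive a contradiction by computing one three-person profile in two different ways under representative consistency. I restrict attention to three alternatives $x,y,z\in X$ and to profiles whose utility indices are constant off $\{x,y,z\}$ (these can be taken to be $0$), so every preference is identified with a vector in $\mathbb{R}^3$ up to positive affine transformation. The first tool is a Harsanyi-type representation. For a finite profile $\succsim_J$ with indices $u_i$, Pareto indifference is the linear condition $\{P-Q : \sum_x u_i(x)(P-Q)(x)=0\ \forall i\}\subseteq\{P-Q: \sum_x u(x)(P-Q)(x)=0\}$ on signed measures of total mass zero. A standard linear-algebra argument then shows that the social index $u$ of $\phi(\succsim_J)$ can be written as $\sum_i a_iu_i+c$. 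I will use this freely: every social preference I compute lies in the affine span of the individual indices together with the constants.

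The second step is to use anonymity and neutrality to force symmetries. Let $e_x,e_y,e_z$ be the indicator utilities, so that $e_x$ is the individual who likes $x$ and is indifferent between $y$ and $z$. Consider first the two-person profile $(e_x,e_y)$. The transposition $\rho=(x\,y)$ maps it to $(e_y,e_x)$, and by anonymity this has the same image. So $\phi(e_x,e_y)=\rho(\phi(e_x,e_y))$, and the Harsanyi form $ae_x+be_y+c$ must be invariant under swapping its first two coordinates up to a positive affine map. Because the $z$-coordinate is fixed, this forces $a=b$. Weak Pareto then fixes the sign: both individuals strictly prefer $\tfrac12x+\tfrac12y$ to $z$, so $a>0$. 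Hence $\phi(e_x,e_y)=S_z:=e_x+e_y$. Similarly, the three-person profile $(e_x,e_y,e_z)$ is invariant under every permutation of $\{x,y,z\}$, modulo relabelling of individuals. Its image must therefore be fixed by the $3$-cycle, which forces complete indifference on $\{x,y,z\}$.

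The third step applies representative consistency with the partition $\{\{1,2\},\{3\}\}$. This gives $\phi(S_z,S_z,e_z)=$ indifference, where $e_z=1-S_z$ is the reverse of $S_z$. In words, two copies of a preference together with one copy of its reversal produce indifference. By neutrality this holds for every "top-two" preference and its reversal. The plan is now to evaluate another three-person profile built from $e_x$, $S_z$ and their reversals in two ways. One partition should reduce it, via the facts just established, to a profile in which one type is a strict Pareto improvement direction, so that weak Pareto forces a strict social ranking. The other partition should reduce it, via the "two copies plus reversal" identity, to complete indifference. The candidate I would try first is $(e_x,e_x,e_y+e_z)$, taking the partitions $\{\{1\},\{2,3\}\}$ and $\{\{1,2\},\{3\}\}$. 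I would then track the Harsanyi coefficients of $\phi(e_x,e_y+e_z)$, restricting them by using the lottery $\tfrac13x+\tfrac13y+\tfrac13z$ against degenerate outcomes to get weak-Pareto sign constraints.

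I expect the main obstacle to be this last step. Representative consistency only links profiles through intermediate aggregates, and Pareto indifference alone leaves the Harsanyi weights free. The contradiction therefore has to come from playing a neutrality-forced equality of weights (from symmetric profiles) against a weak-Pareto-forced strict inequality on a non-symmetric profile that contains the same aggregated block. Choosing the right test profile is the crux. If the $3$-alternative, $3$-person search fails, I would exploit the affine-invariance of vNM utilities: the same preference $e_x$ can be represented as $2e_x$. This should rule out any weights that depend on the scale of the indices, and with it the candidate "utilitarian" rules that would otherwise survive.
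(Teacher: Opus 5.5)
Your first three steps are sound (modulo a normalization issue noted at the end), and getting $\phi(e_x,e_y)=S_z$ directly from the $(x\,y)$ symmetry is a valid shortcut. But the proposal stops where the contradiction has to be produced, and the test profile you name cannot produce one. On $\{x,y,z\}$, $e_y+e_z$ is the reversal of $e_x$. In $(e_x,e_x,e_y+e_z)$ the partition $\{\{1,2\},\{3\}\}$ is vacuous by unanimity. The partition $\{\{1\},\{2,3\}\}$ only involves $B=\phi(e_x,e_y+e_z)$, which by your span argument is one of $e_x$, $\sim^*$, $e_y+e_z$. So every regrouping stays inside the family $\{e_x,\sim^*,e_y+e_z\}$. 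On that family weak Pareto never bites for a non-unanimous profile: a preference and its reversal never rank two lotteries strictly in the same direction, so the sign constraints you plan to get from $\frac13x+\frac13y+\frac13z$ are empty. Moreover, an analogue of the indifference priority rule satisfies all the axioms there, so no contradiction can come out of this profile. The rescaling idea adds nothing either, since $\phi$ acts on preferences, not on indices.

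The missing ingredient is a way to pin down the aggregate of a preference and its reversal. The paper gets this for free by using strict orders with the middle alternative and $Y$ at the midpoint, such as $[x;y,Y;z]$. Its reversal is its image under the transposition $(x\,z)$, so anonymity and neutrality give $\phi([x;y,Y;z],[z;y,Y;x])=\sim^*$ at once. The paper then applies Lemma \ref{lemma1} to $\phi([x;y,Y;z],A,A)=\sim^*$ to identify $A=\phi([z;x,Y;y],[y;z,Y;x])=[z;y,Y;x]$. From this it deduces $\phi(\sim^*,\sim^*,[z;y,Y;x])=\sim^*$ and evaluates $([z;x,Y;y],[y;x,Y;z],[z;y,Y;x])$ in two ways.

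Your family can be rescued, but since $S_z$ is not a permutation image of $e_z$, you must extract the reversal aggregate from your own identity $\phi(S_z,S_z,e_z)=\sim^*$. Let $B=\phi(e_z,S_z)\in\{e_z,\sim^*,S_z\}$. If $B=S_z$, regrouping $\{2,3\}$ turns $\phi(S_z,S_z,e_z)$ into $\phi(S_z,S_z,S_z)=S_z$. If $B=e_z$, regrouping $\{2,3\}$ and then $\{1,2\}$ turns it into $\phi(S_z,e_z,e_z)$ and then into $\phi(e_z,e_z,e_z)=e_z$. Hence $B=\sim^*$ and $\phi(S_z,\sim^*,\sim^*)=\sim^*$. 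Next, $\phi(S_x,S_z)=e_y$ by the argument you gave for $\phi(e_x,e_y)$, using $(x\,z)$ and the lottery $y$ against $\frac12x+\frac12z$. Under $\{\{1,2\},\{3\}\}$ the profile $(e_x,S_x,S_z)$ aggregates to $\phi(\sim^*,\sim^*,S_z)=\sim^*$, since $\phi(e_x,S_x)=\sim^*$ by neutrality. Under $\{\{1\},\{2,3\}\}$ it aggregates to $\phi(e_x,e_y,e_y)$, which weak Pareto forces to rank $\frac12x+\frac12y$ strictly above $z$. Completed this way, your route avoids the weak-Pareto Harsanyi lemma altogether, because your two-person aggregates come from symmetry, the span argument, and regrouping.

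Finally, the normalization ``$0$ off $\{x,y,z\}$'' fails when $Y\neq\emptyset$. Then $e_z$ is not the reversal of $S_z$, the trichotomy for $B$ breaks down, and $\phi(e_x,e_y,e_z)$ is not complete indifference, since weak Pareto puts $\frac13x+\frac13y+\frac13z$ strictly above each $w\in Y$. Give every index on $Y$ its mean value over $\{x,y,z\}$ instead, as the paper's indices implicitly do. That family is closed under the span and under permutations of $\{x,y,z\}$, and the argument goes through unchanged.
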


The bound on the number of outcomes is tight. In the appendix, we provide a rule that satisfies the axioms when $|X| = 2$. However, if we consider strong Pareto instead of the Pareto principle, then the impossibility holds.

\begin{definition}
    Strong Pareto. For any $J \in \mathcal{I}$, any $\succsim_J \in \mathcal{V}^J$, and any $P, Q \in \Delta(X)$, i) if $P \succsim_i Q$ for all $i \in J$, then $P \succsim_{\phi(\succsim_J)} Q$, and ii) if additionally there exists an $i \in J$ for whom $P \succ_i Q$, then $P \succ_{\phi(\succsim_J)} Q$.
\end{definition}

\begin{proposition} \label{prop1}
    Suppose $|I| \geq 3$ and $|X| \geq 2$. There is no social welfare function that satisfies representative consistency, strong Pareto, anonymity, and neutrality.
\end{proposition}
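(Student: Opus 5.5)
The plan is to reduce the statement to the two-outcome core of the ordinal argument behind \cref{thm1}. I would fix two distinct outcomes $x,y$ and track only how each preference, individual or social, ranks $x$ against $y$. The key observation is that a vNM preference restricted to $\Delta(\{x,y\})$ is completely determined by this comparison: it is ``$x$ above $y$'', ``$y$ above $x$'', or ``indifferent'', and in each case every comparison of lotteries supported on $\{x,y\}$ follows. So on $\Delta(\{x,y\})$ the vNM setting carries no more information than the ordinal setting with two alternatives. When $|X|=2$, $\mathcal{V}$ is literally this three-element set. Every axiom of \cref{prop1} then translates verbatim into the corresponding axiom of \cref{thm1}, with strong Pareto over lotteries implying ordinal strong Pareto on $\{x,y\}$, and \cref{thm1} applies directly.

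For $|X|>2$ I would choose individual preferences so that the transposition $\rho$ swapping $x$ and $y$ acts nicely. Let $a$ have utility $u(x)=1$, $u(y)=0$ and $u(z)=1/2$ for every other $z$, and let $b=\rho(a)$. The steps, in order, are as follows.
\begin{enumerate}
\item Let $c=f(a,b)$. Anonymity and neutrality give $\rho(c)=f(b,a)=c$. If $c$ ranked $x$ strictly above $y$, then $\rho(c)$ would rank $y$ strictly above $x$, a contradiction; so $x\sim_c y$.
\item Strong Pareto gives $f(c,c,c)=c$. This holds because all members agree on every pair of lotteries, and strong Pareto then forces the social preference to agree on all comparisons.
\item Representative consistency and anonymity turn three-person profiles built from $a$, $b$ and $c$ into two-stage profiles. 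Strong Pareto then pins down the social $x$-versus-$y$ ranking of each two-stage profile whenever all members weakly agree and at least one is strict. An example is $f(a,a,b)=f(c,c,a)$, which ranks $x$ strictly above $y$.
\item Comparing two different partitions of the same profile should yield two incompatible $x$-versus-$y$ rankings, mirroring the proof of \cref{thm1}. The candidate profile is $(a,b,c)$. Grouping $\{1,2\}$ gives $f(c,c,c)=c$, which has $x\sim y$. Grouping $\{1,3\}$ gives $f(d,d,b)$ with $d=f(a,c)$, where $d$ ranks $x$ strictly above $y$ by strong Pareto. Grouping $\{2,3\}$ gives $f(e,e,a)$ with $e=\rho(d)$.
\end{enumerate}

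The main obstacle is step 4. The intermediate social preferences $c$, $d$ and $e$ are pinned down on $\Delta(\{x,y\})$ but are unknown on lotteries involving other outcomes. Representative consistency feeds these whole preferences back into $f$, so I cannot literally reuse the two-alternative argument of \cref{thm1} with $X=\{x,y\}$.

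My first attempt would be to check that every deduction in the proof of \cref{thm1} uses only neutrality under the $x$–$y$ swap and strong Pareto comparisons between $x$ and $y$. Both survive in the vNM setting: strong Pareto on the pair $(x,y)$ needs only each individual's $x$-versus-$y$ ranking, and the neutrality identity above holds for the full preferences. If the argument needs to know a preference exactly, as with $f(c,c,c)=c$, I would obtain it from strong Pareto on identical profiles, which determines the whole preference and not just its $x$-versus-$y$ ranking.

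If this bookkeeping fails, my fallback is to find a profile whose two partitions produce, respectively, unanimous indifference between $x$ and $y$ and a profile with weak agreement plus one strict preference. Strong Pareto would then force $x\sim y$ and $x\succ y$ simultaneously, a contradiction.
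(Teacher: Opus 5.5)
Your reduction for $|X|=2$ is correct and is exactly what the paper does. For $|X|\ge 3$, however, step 4 is left unresolved, and the ``first attempt'' you propose for it rests on a false claim. The two-alternative proof of Theorem~\ref{thm1} uses the exact identity $f(x,\sim)=x$ and then applies $f(y,x)=\sim$ to the resulting pair. In your notation, this means evaluating $f(b,d)$ with $d=f(a,c)$. Neutrality under the swap $\rho$ says something about $f(b,d)$ only if $d$ is exactly $\rho(b)=a$; knowing only that $d$ ranks $x$ above $y$ gives nothing there. So the deductions do not use only $x$-versus-$y$ comparisons. Unanimity does not rescue you either, because $(a,c)$ is not an identical profile. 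With only the $x$-versus-$y$ information you record about $c$ and $d$, the value of $f(d,d,b)$ (your grouping $\{1,3\}$) cannot be determined. Your fallback merely restates the shape of the desired contradiction.

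The missing idea is that your own choice of $a$ already pins $c$ down completely. For each $z\notin\{x,y\}$, both $a$ and $b$ are indifferent between $z$ and $\tfrac12 x+\tfrac12 y$. Part (i) of strong Pareto, applied in both directions, therefore gives $z\sim_c\tfrac12 x+\tfrac12 y$. Together with $x\sim_c y$ from step 1 and the vNM form of $c$, every outcome gets the same utility, i.e.\ $c=\sim^*$. Strong Pareto then gives $d=f(a,\sim^*)=a$ exactly, and the ordinal argument goes through verbatim. Grouping $\{1,2\}$ in $(a,b,\sim^*)$ yields $f(\sim^*,\sim^*,\sim^*)=\sim^*$. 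Grouping $\{1,3\}$ yields $f(a,b,a)=f(a,f(b,a),f(b,a))=f(a,\sim^*,\sim^*)=a\neq\sim^*$.

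Once repaired, your route is a genuinely more elementary alternative to the paper's. For $|X|\ge 3$ the paper falls back on Theorem~\ref{thm2} and hence on the Harsanyi-type Lemma~\ref{lemma1}. Your argument needs neither the lemma nor the case split. It also makes clear that the identity $f(a,\sim^*)=a$ is exactly what strong Pareto buys over the Pareto principle, under which it is not available.
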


%% file: draft/independence.tex
\section{Independence of Axioms}

We demonstrate the partial independence of our axioms for Theorem \ref{thm1} and fully for Theorem \ref{thm2} under the assumption that $I$ and $X$ are finite.

\textbf{Representative consistency}. In the ordinal setting, consider the Borda count social welfare function $f_{bc}$.

Consider any $J \in \mathcal{I}$ and $\succsim_J$.
For each $i \in J$, let vector $b_i \in \mathbb{N}^{|X|}$, where the $x$-th component of $b_i$ is given by $b_{i}^x = |\{y \in X: x \succsim y\}|$.\footnote{Alternative tie-breaking methods can be used to define Borda count. These other versions of Borda count also don't satisfy representative consistency.} Let $b_0 = \sum_{i \in J}b_i$, and let
$f_{bc}(\succsim_J)$ be defined as $x \succsim_{f_{bc}(\succsim_J)} y \iff b_0^x \geq b_0^y$.

We demonstrate through an example that Borda count doesn't satisfy representative consistency.
\begin{example}
    Pick any alternatives $x, y\in X$ and let $Y = X\backslash \{x,y\}$. 

Denote $\begin{bmatrix}
    x \\ y, Y
\end{bmatrix}$ as preferences $\succsim \in \mathcal{R}$ such that $x \succ y \sim w \sim w'$ for every $w, w' \in Y$ and $\begin{bmatrix}
    x \\ y \\ Y
\end{bmatrix}$ as preferences $\succsim \in \mathcal{R}$ such that $x \succ y \succ w \sim w'$ for every $w, w' \in Y$. Also denote $\sim^*$ as complete indifference.

Note that

$
    f_{bc}\left(\begin{bmatrix}
    x \\y, Y 
\end{bmatrix}, \sim^* , \begin{bmatrix}
    y \\ x, Y
\end{bmatrix}\right) = \begin{bmatrix}
    x,y \\  Y
\end{bmatrix}
$,
$
        f_{bc}\left(\begin{bmatrix}
    x \\y, Y
\end{bmatrix}, \begin{bmatrix}
    x \\y, Y
\end{bmatrix} , \begin{bmatrix}
    y \\ x, Y
\end{bmatrix}\right) = \begin{bmatrix}
    x \\ y \\ Y
\end{bmatrix}
$, and 
$
        f_{bc}\left(\begin{bmatrix}
    x \\ y, Y 
\end{bmatrix}, \sim^* \right) = \begin{bmatrix}
    x \\ y, Y 
\end{bmatrix}
$.

    Thus,
\begin{align*}
    f_{bc}\left(\begin{bmatrix}
    x \\y, Y 
\end{bmatrix}, \sim^* , \begin{bmatrix}
    y \\ x, Y
\end{bmatrix}\right) = \begin{bmatrix}
    x,y \\  Y
\end{bmatrix}  \neq  \begin{bmatrix}
    x \\ y \\ Y \end{bmatrix} = f_{bc}\left(f_{bc}\left(\begin{bmatrix}
    x \\ y, Y
\end{bmatrix}, \sim^* \right), f_{bc}\left(\begin{bmatrix}
    x \\ y, Y
\end{bmatrix}, \sim^* \right), \begin{bmatrix}
    y \\ x, Y
\end{bmatrix}\right), 
\end{align*}
and representative consistency is not satisfied.
\end{example}
For any alternative $x$ that Pareto dominates $y$, $b_{0}^x > b_{0}^y$, and so Borda count satisfies strong Pareto. The definition of Borda count doesn't treat individuals or alternatives differently, so it satisfies anonymity and neutrality.

In the expected utility setting, consider relative utilitarianism $\phi_{ru}$.\footnote{See \cite{karni1998impartiality}, \cite{dhillon1999relative}, \cite{segal2000let}, and \cite{borgers2017revealed}.} For any $J \in \mathcal{I}$ and $\succsim_J \in \mathcal{V}^J$, $\phi_{ru}(\succsim_J)$ is represented by the utility index $u_0 = \sum_{i \in J} \bar{u}_{i}$, where each $\bar{u}_{i}$ is a utility index for $\succsim_i$ and is normalized such that $\min_{x \in X} \bar{u}_{i}(x) = 0$ and, if $\succsim_i$ is not complete indifference, $\max_{x \in X} \bar{u}_{i}(x) = 1$.

We demonstrate through an example that relative utilitarianism doesn't satisfy representative consistency.

\begin{example}
Pick any $x, y \in X$. Let $Y = X \backslash \{x, y\}$. Denote $\begin{bmatrix}
    x \\ Y \\ y
\end{bmatrix} \in \mathcal{V}$ as the vNM-rational preference that is represented by the utility index
\begin{align*}
    u(w) = \begin{cases}
        1 & \text{if $w=x$} \\
        0 & \text{if $w \in Y$} \\
        -1 & \text{if $w=y$}
    \end{cases}.
\end{align*} Again denote $\sim^*$ as complete indifference.

   Note that
$
    \phi_{ru}\left(\begin{bmatrix}
    x \\ Y \\ y
\end{bmatrix}, \sim^* , \begin{bmatrix}
    y \\ Y \\ x
\end{bmatrix}\right) = \sim^*
$,
$
        \phi_{ru}\left(\begin{bmatrix}
    x \\ Y \\ y
\end{bmatrix}, \begin{bmatrix}
    x \\ Y \\ y
\end{bmatrix} , \begin{bmatrix}
    y \\ Y \\ x
\end{bmatrix}\right) = \begin{bmatrix}
    x \\ Y \\ y
\end{bmatrix}
$, and 
$
        \phi_{ru}\left(\begin{bmatrix}
    x \\ Y \\ y
\end{bmatrix}, \sim^* \right) = \begin{bmatrix}
    x \\ Y \\ y
\end{bmatrix}
$.

    Thus,
\begin{align*}
    \phi_{ru}\left(\begin{bmatrix}
    x \\ Y \\ y
\end{bmatrix}, \sim^* , \begin{bmatrix}
    y \\ Y \\ x
\end{bmatrix}\right) = \sim^* \neq  \begin{bmatrix}
    x \\ Y \\ y \end{bmatrix} = \phi_{ru}\left(\phi_{ru}\left(\begin{bmatrix}
    x \\ Y \\ y
\end{bmatrix}, \sim^* \right), \phi_{ru}\left(\begin{bmatrix}
    x \\ Y \\ y
\end{bmatrix}, \sim^* \right), \begin{bmatrix}
    y \\ Y \\ x
\end{bmatrix}\right), 
\end{align*}
and representative consistency is not satisfied.
    
\end{example}
In relative utilitarianism, the social utility index is a positive linear combination of individual utility indices, and so relative utilitarianism satisfies the Pareto principle. Like the Borda count, the definition of relative utilitarianism doesn't treat individuals and alternatives differently, so it satisfies anonymity and neutrality.

\textbf{Strong Pareto and the Pareto principle}. A social welfare function that always returns complete indifference satisfies representative consistency, anonymity, and neutrality.

In the ordinal setting, we don't provide either a social welfare function that satisfies representative consistency, strong Pareto, and neutrality, or one that satisfies representative consistency, strong Pareto, and anonymity. In the vNM setting, the Pareto principle is weaker than strong Pareto, so we do demonstrate the independence of anonymity and neutrality.

\textbf{Anonymity}. Let $\succ^I$ be any strict total order (irreflexive, transitive, complete) on $I$. Let a \emph{hierarchical dictatorship} $\phi_{\succ^I}$ be the following rule: for any $J \in \mathcal{I}$ and $\succsim_J \in \mathcal{V}^J$, $\phi_{\succ^I}(\succsim_J) = \succsim_{\max_{\succ^I}(J)}$. The hierarchical dictatorship satisfies representative consistency, since for any $J \in \mathcal{I}$ and $\succsim_J \in \mathcal{V}^J$, the preferences of the hierarchical dictator $\succsim_{\max_{\succ^I}(J)}$ is preserved when aggregating blocks of the partition of individuals. The hierarchical dictatorship also satisfies the Pareto principle, since if the antecedent of Pareto indifference holds, then it holds for the hierarchical dictator (similarly for weak Pareto). The hierarchical dictatorship also satisfies the neutrality.

\textbf{Neutrality}. Let $\succ^\mathcal{V}$ be any strict total order on $\mathcal{V}$. Let a \emph{total priority rule} $\phi_{\succ^\mathcal{V}}$ be the following rule: for any $J \in \mathcal{I}$ and $\succsim_J \in \mathcal{V}^J$, $\phi_{\succ^\mathcal{V}}(\succsim_J) = \max_{\succ^{\mathcal{V}}} (\succsim_J)$. The total priority rule satisfies representative consistency and the Pareto principle for the same reasons as the hierarchical dictatorship. It also satisfies anonymity. 





%% file: draft/proof_certain.tex
\subsection{Proof of Theorem \ref{thm1}}

\begin{proof}

Throughout the proof we will not refer to specific individuals since $f$ is anonymous.

Also, we will define and repeatedly redefine $\succsim_0$ as the social preference of some preference profile at hand.

First note that strong Pareto implies unanimity:

\begin{definition}
    Unanimity. For any $J \in \mathcal{I}$ and any $\succsim \in \mathcal{R}$, $f(\succsim^J) = \succsim$.
\end{definition}

Representative consistency and unanimity implies a stronger notion of representative consistency, which we use in our proofs:
\begin{definition}
    Strong representative consistency. For any $J \in \mathcal{I}$, any $\succsim_J \in \mathcal{R}^J$, any nonempty $K \subset J$, $f(\succsim_J) = f\left(f(\succsim_{K})^{K}, \succsim_{J \backslash K}\right)$.
\end{definition}

\cite{chambers2008consistent} makes this same observation.

\emph{Case 1:} $|X| = 2$.

Suppose $X = \{x,y\}$. There are three preferences: $x \succ y$, $x \prec y$, and $x \sim y$. Abusing notation, denote these three preferences $x$, $y$, and $\sim$, respectively.

By anonymity and neutrality, $f(x,y) = \sim$.

By strong Pareto, $f(x,\sim) = x$ and $f(y,\sim) = y$.

By representative consistency and strong Pareto, $f(x, y, \sim) = f\left(f(x,y), f(x,y),\sim\right) = f(\sim, \sim, \sim) =\sim$.

By representative consistency and strong Pareto, $f(x, y, \sim) = f\left(f(x,\sim), y, f(x,\sim)\right) = f(x, y, x) = f\left(x, f(y,x), f(y,x)\right) = f(x, \sim, \sim) = x$. This contradicts that $f(x, y, \sim) = \sim$

\emph{Case 2: $|X| \geq 3$}

Pick any alternatives $x, y, z \in X$ and let $Y = X\backslash \{x,y,z\}$. 

Denote $\begin{bmatrix}
    x \\ y \\ z \\ Y
\end{bmatrix}$ as preferences $\succsim \in \mathcal{R}$ such that $x \succ y \succ z \succ w \sim w'$ for every $w, w' \in Y$. 

First, 

\begin{align}
    f\left(\begin{bmatrix}
    x \\ y \\ z \\ Y
\end{bmatrix}, \begin{bmatrix}
    z \\ x \\ y \\ Y
\end{bmatrix}, \begin{bmatrix}
    y \\ z \\ x \\ Y
\end{bmatrix}\right) = \begin{bmatrix}
    x, y, z \\ Y
\end{bmatrix} =: \succsim_0
\end{align}.

To see this, by neutrality and anonymity, $x \sim_0 y \sim_0 z$. By strong Pareto, $w \sim_0 w'$ for any $w, w'\in Y$. By strong Pareto, $x \succ_0 w$ for any $w \in Y$.

By strong representative consistency,

\begin{align} \label{eq1.2}
    f\left(\begin{bmatrix}
    x \\ y \\ z \\ Y
\end{bmatrix}, f\left(\begin{bmatrix}
    z \\ x \\ y \\ Y
\end{bmatrix}, \begin{bmatrix}
    y \\ z \\ x \\ Y
\end{bmatrix}\right), f\left(\begin{bmatrix}
    z \\ x \\ y \\ Y
\end{bmatrix}, \begin{bmatrix}
    y \\ z \\ x \\ Y
\end{bmatrix}\right)\right) = \begin{bmatrix}
    x, y, z \\ Y
\end{bmatrix}
\end{align}.

We claim that
\begin{align} \label{eq1.3}
    f\left(\begin{bmatrix}
    z \\ x \\ y \\ Y
\end{bmatrix}, \begin{bmatrix}
    y \\ z \\ x \\ Y
\end{bmatrix}\right) = \begin{bmatrix}
    z \\ y \\ x \\ Y
\end{bmatrix} =: \succsim_0
\end{align}

If $y \succsim_0 z$, then equation \ref{eq1.2} violates strong Pareto. And so $y \prec_0 z$. Similarly, if $x \succsim_0 y$, then equation \ref{eq1.2} violates strong Pareto, and so $x \prec_0 y$. By strong Pareto, $x \succ_0 w$ for any $w \in Y$.

Substituting equation \ref{eq1.3} in equation \ref{eq1.2}, 

\begin{align}
    f\left(\begin{bmatrix}
    x \\ y \\ z \\ Y
\end{bmatrix}, \begin{bmatrix}
    z \\ y \\ x \\ Y
\end{bmatrix}, \begin{bmatrix}
    z \\ y \\ x \\ Y
\end{bmatrix}\right) = \begin{bmatrix}
    x, y, z \\ Y
\end{bmatrix}
\end{align}.

By representative consistency, 

\begin{align} \label{eq1.5}
    f\left(f\left(\begin{bmatrix}
    x \\ y \\ z \\ Y
\end{bmatrix}, \begin{bmatrix}
    z \\ y \\ x \\ Y
\end{bmatrix}\right), f\left(\begin{bmatrix}
    x \\ y \\ z \\ Y
\end{bmatrix}, \begin{bmatrix}
    z \\ y \\ x \\ Y
\end{bmatrix}\right), \begin{bmatrix}
    z \\ y \\ x \\ Y
\end{bmatrix}\right) = \begin{bmatrix}
    x, y, z \\ Y
\end{bmatrix}
\end{align}

Let's consider possibilities for
\begin{align}
    f\left(\begin{bmatrix}
    x \\ y \\ z \\ Y
\end{bmatrix}, \begin{bmatrix}
    z \\ y \\ x \\ Y
\end{bmatrix}\right) =: \succsim_0
\end{align}

By anonymity and neutrality, $x \sim_0 z$. If $y \succsim_0 x$, then equation \ref{eq1.5} violates strong Pareto. If $y \prec_0 x$, then $y \prec_0 z$, and so equation \ref{eq1.5} violates strong Pareto. We have reached a contradiction.

\end{proof}

%% file: draft/proof_uncertain.tex
\subsection{Proof of Theorem \ref{thm2}}

\begin{proof}

Similarly to the ordinal setting, representative consistency and the Pareto principle imply strong representative consistency, now stated in the vNM setting:

\begin{definition}
    Strong representative consistency. For any $J \in \mathcal{I}$, any $\succsim_J \in \mathcal{V}^J$, any nonempty $K \subset J$, $\phi(\succsim_J) = \phi\left(f(\succsim_{K})^{K}, \succsim_{J \backslash K}\right)$.
\end{definition}

We also make use of a variant of \citeauthor{harsanyi1955cardinal}'s Aggregation Theorem.

\begin{lemma} \label{lemma1}
    (\cite{weymark1993harsanyi}, \cite{de1995note}) Suppose $\phi$ satisfies weak Pareto. For any $J \in \mathcal{I}$ and $\succsim_J \in \mathcal{V}^J$, there exists nonnegative weight $\eta$ and semipositive weights $\{\lambda_i\}_{i \in J}$ ($\lambda_i$ nonnegative and not all zero) and $\mu \in \mathbb{R}$ such that
    \begin{align*}
        \eta u_0(x) = \sum_{i \in J} \lambda_j u_j(x) + \mu
    \end{align*}
    for all $x \in X$, where $u_0$ is the utility index in the representation of $\phi(\succsim_J)$ and $u_i$ is that of $\succsim_i$.
\end{lemma}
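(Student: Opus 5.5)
The plan is a separating hyperplane argument in the finite-dimensional space of utility vectors, in the spirit of the Harsanyi aggregation theorem. Throughout, $J$ is taken to be finite, as in the independence section; the sum $\sum_{i \in J}$ in the statement presupposes this.

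\textbf{Setup.} Fix $J$ and $\succsim_J$, and fix utility indices $u_0$ for $\phi(\succsim_J)$ and $u_i$ for each $\succsim_i$. Extend them linearly to lotteries, $u(P)=\sum_x u(x)P(x)$. Define the affine map $U:\Delta(X)\to\mathbb{R}^{\{0\}\cup J}$ by $U(P)=(u_0(P),(u_i(P))_{i\in J})$. Let
\begin{align*}
D=\{U(P)-U(Q): P,Q\in\Delta(X)\}.
\end{align*}
Since $\Delta(X)$ is convex and $U$ is affine, $D$ is convex. It is also symmetric ($d\in D\Rightarrow -d\in D$) and contains $0$.

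\textbf{Degenerate case.} First I handle the case where there are no $P,Q$ with $u_i(P)>u_i(Q)$ for all $i\in J$. Let $D_J$ be the projection of $D$ onto the $J$-coordinates. It is convex, symmetric, contains $0$, and is disjoint from the open positive orthant of $\mathbb{R}^J$. Finite-dimensional separation gives a nonzero $\lambda$ with $\lambda\cdot d\le \lambda\cdot k$ for all $d\in D_J$ and $k$ in the orthant.
\begin{itemize}
\item Letting $k\to 0$ gives $\lambda\cdot d\le 0$ on $D_J$, and symmetry then forces $\lambda\cdot d=0$ on $D_J$.
\item Since $\lambda\neq 0$, it cannot vanish on the open orthant, so $\lambda\cdot k\ge 0$ there forces $\lambda\ge 0$. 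Hence $\lambda$ is semipositive.
\item $\lambda\cdot d=0$ on $D_J$ says $\sum_i\lambda_i u_i(x)$ is constant in $x$; calling this constant $-\mu$ gives the identity with $\eta=0$.
\end{itemize}

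\textbf{Main case.} Otherwise, let
\begin{align*}
K=\{d: d_i>0 \ \forall i\in J,\ d_0\le 0\},
\end{align*}
a nonempty convex set with nonempty interior. Weak Pareto is exactly the statement $D\cap K=\emptyset$: $U(P)-U(Q)\in K$ would mean $P\succ_i Q$ for all $i$ but not $P\succ_{\phi(\succsim_J)}Q$.

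Separation gives a nonzero linear functional $\ell(d)=-\eta d_0+\sum_i\lambda_i d_i$ with $\ell(d)\le \ell(k)$ for $d\in D$, $k\in K$.
\begin{itemize}
\item Since $0\in D$ and $0\in\overline K$, we get $\sup_D\ell\le 0\le\inf_K\ell$. Symmetry of $D$ then gives $\ell\equiv 0$ on $D$.
\item $\ell\ge 0$ on $K$: sending $d_0\to-\infty$ gives $\eta\ge 0$, and sending individual $d_i\to\infty$ gives $\lambda_i\ge 0$.
\item $\ell\equiv 0$ on $D$, applied to degenerate lotteries, means $\eta u_0(x)-\sum_i\lambda_i u_i(x)$ is a constant $\mu$. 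This is the required identity.
\end{itemize}

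\textbf{Main obstacle.} The hardest step is showing $\lambda$ is not identically zero in the main case. Suppose $\lambda=0$. Then $\ell\neq 0$ forces $\eta>0$, so $u_0$ is constant. But in the main case there are $P,Q$ with $u_i(P)>u_i(Q)$ for all $i$, and weak Pareto requires $u_0(P)>u_0(Q)$, which contradicts $u_0$ being constant. So $\lambda$ is semipositive.

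I would invoke the finite-dimensional separation theorem for disjoint nonempty convex sets, which needs no interior assumption on $D$. The other point requiring care is the sign bookkeeping that makes $\eta,\lambda_i\ge 0$.
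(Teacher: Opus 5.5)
Your proof is correct, provided $J$ is read as finite; the sum in the statement, the cited results, and the paper's only application (with $|J|=3$) all require this. The paper gives no argument of its own: it cites \cite{weymark1993harsanyi} and \cite{de1995note}, noting in a footnote that the image of $\Delta(X)$ under the vector of expected utilities is convex. Your separation argument on the convex set $D\subset\mathbb{R}^{\{0\}\cup J}$ is essentially a self-contained proof of that cited result in this setting; it needs no finiteness of $X$, and your case split handles the possibility $\eta=0$.
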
\footnote{\cite{weymark1993harsanyi} gives this result when $X$ is finite. \cite{de1995note} gives this result for collections of functions $f_1, \dots, f_n$ on some set $A$ for which the image of $(f_1, \dots, f_n)$ on $A$ is convex. The case where $A$ is $\Delta(X)$ and $f_1, \dots, f_n$ are vNM-rational expected utility functions satisfies this convexity requirement.}

Pick any three alternatives $x, y, z \in X$. Denote $Y = X \backslash \{x,y,z\}$.

Denote $\begin{bmatrix}
    x \\ y, Y \\ z
\end{bmatrix} \in \mathcal{V}$ as the vNM-rational preference that is represented by the utility index
\begin{align*}
    u(w) = \begin{cases}
        1 & \text{if $w=x$} \\
        0 & \text{if $w=y$ or $w \in Y$} \\
        -1 & \text{if $w=z$}
    \end{cases}
\end{align*}. Also denote $\sim^*$ as complete indifference.

First, 
\begin{align} \label{eq2.1}
    \phi\left(\begin{bmatrix}
    x \\ y, Y \\ z
\end{bmatrix}, \begin{bmatrix}
    z \\ y, Y \\ x
\end{bmatrix}\right) = \sim^* =: \succsim_0
\end{align}.

To see this, $x \sim_0 z$ by anonymity and neutrality. By Pareto indifference, $w \sim_0 y$ for all $w \in Y$. Since $y \sim_i \frac{1}{2} x + \frac{1}{2} z$ for $i = 1,2$, by Pareto indifference $y \sim_0 \frac{1}{2} x + \frac{1}{2} z$. Since $\succsim_0$ is vNM-rational, $\frac{1}{2} x + \frac{1}{2} z \sim_0 x$. By transitivity, $\succsim_0$ is complete indifference.
    
Next,
\begin{align} \label{eq2.2}
    \phi\left(\begin{bmatrix}
    x \\ y, Y \\ z
\end{bmatrix}, \begin{bmatrix}
    z \\ x, Y \\ y
\end{bmatrix}, \begin{bmatrix}
    y \\ z, Y \\ x
\end{bmatrix}\right) = \sim^* =: \succsim_0
\end{align}.

To see this, first note that $x \sim_0 y \sim_0 z$ by neutrality and anonymity. Since for any $w \in Y$, $w \sim_i \frac{1}{3}x + \frac{1}{3}y + \frac{1}{3}z$ for $i = 1,2,3$, by Pareto indifference $w \sim_0 \frac{1}{3}x + \frac{1}{3}y + \frac{1}{3}z$. Since $\succsim_0$ is vNM-rational, $\frac{1}{3}x + \frac{1}{3}y + \frac{1}{3}z \sim_0 x$. By transitivity, $\succsim_0$ is complete indifference.

By strong representative consistency,

\begin{align} \label{eq2.3}
    \phi\left(\begin{bmatrix}
    x \\ y, Y \\ z
\end{bmatrix}, \phi\left(\begin{bmatrix}
    z \\ x, Y \\ y
\end{bmatrix}, \begin{bmatrix}
    y \\ z, Y \\ x
\end{bmatrix}\right), \phi\left(\begin{bmatrix}
    z \\ x, Y \\ y
\end{bmatrix}, \begin{bmatrix}
    y \\ z, Y \\ x
\end{bmatrix}\right)\right) = \sim^*
\end{align}

By Lemma \ref{lemma1}, there exists $\eta \in \mathbb{R}_{\geq 0}$, $\{\lambda_i\}_{i = 1,2,3} \in \mathbb{R}_{\geq 0}^3$ (not all zero), and $\mu \in \mathbb{R}$ such that 
\begin{align} \label{eq2.4}
    \eta u_0(x) = \sum_{i = 1,2,3} \lambda_i u_i(x) + \mu
\end{align}

for all $x \in X$. Here, we refer to $u_2$ and $u_3$ as both representing the aggregated preferences $\phi\left(\begin{bmatrix}
    z \\ x, Y \\ y
\end{bmatrix}, \begin{bmatrix}
    y \\ z, Y \\ x
\end{bmatrix}\right)$ (instead of the preferences in equation $\ref{eq2.2}$).

Note that $u_0$ is constant. By weak Pareto, $u_2(z) > u_2(x)$ and $u_3(z) > u_3(x)$. 

We claim that $\lambda_1 > 0$. If not, i.e. $\lambda_1 = 0$, then either $\lambda_2 > 0$ or $\lambda_3 > 0$. Since $u_2(z) > u_2(x)$ and $u_3(z) > u_3(x)$, this contradicts that $u_0$ is constant.

Since $\lambda_1 > 0$ and $u_0$ is constant, either $\lambda_2 > 0$ or $\lambda_3 > 0$. Since $u_2$ and $u_3$ represent the same preference, they must be positive affine transformations of each other, i.e. $u_2 = \alpha u_3 + \beta$ (pointwise) for some $\alpha > 0$ and $\beta$. Then, $\lambda_2 u_2 + \lambda_3 u_3= (\lambda_2 + \lambda_3\alpha)u_2 + \lambda_3\beta$, and $\lambda_2 + \lambda_3\alpha > 0$

Then, to satisfy equation \ref{eq2.4}, $u_2$ must be a negative affine transformation of $u_1$, and thus
\begin{align}  \label{eq2.5}
    \phi\left(\begin{bmatrix}
    z \\ x, Y \\ y
\end{bmatrix}, \begin{bmatrix}
    y \\ z, Y \\ x
\end{bmatrix}\right) = \begin{bmatrix}
    z \\ y, Y \\ x
\end{bmatrix}
\end{align}

Substituting in equation \ref{eq2.3}, 

\begin{align} \label{eq2.6}
    \phi\left(\begin{bmatrix}
    x \\ y, Y \\ z
\end{bmatrix}, \begin{bmatrix}
    z \\ y, Y \\ x
\end{bmatrix}, \begin{bmatrix}
    z \\ y, Y \\ x
\end{bmatrix}\right) = \sim^*
\end{align}

By representative consistency and equation \ref{eq2.1},
\begin{align} \label{eq2.7}
    \phi\left(\phi\left(\begin{bmatrix}
    x \\ y, Y \\ z
\end{bmatrix}, \begin{bmatrix}
    z \\ y, Y \\ x
\end{bmatrix}\right), \phi\left(\begin{bmatrix}
    x \\ y, Y \\ z
\end{bmatrix}, \begin{bmatrix}
    z \\ y, Y \\ x
\end{bmatrix}\right), \begin{bmatrix}
    z \\ y, Y \\ x
\end{bmatrix}\right) = \sim^* = \phi\left(\sim^*, \sim^*, \begin{bmatrix}
    z \\ y, Y \\ x
\end{bmatrix}\right) 
\end{align}

Now consider
\begin{align} \label{eq2.8}
    \phi\left(\begin{bmatrix}
    z \\ x, Y \\ y
\end{bmatrix}, \begin{bmatrix}
    y \\ x, Y \\ z
\end{bmatrix}, \begin{bmatrix}
    z \\ y, Y \\ x
\end{bmatrix}\right) 
& = \phi\left(\phi\left(\begin{bmatrix}
    z \\ x, Y \\ y
\end{bmatrix}, \begin{bmatrix}
    y \\ x, Y \\ z
\end{bmatrix}\right), \phi\left(\begin{bmatrix}
    z \\ x, Y \\ y
\end{bmatrix}, \begin{bmatrix}
    y \\ x, Y \\ z
\end{bmatrix}\right), \begin{bmatrix}
    z \\ y, Y \\ x
\end{bmatrix}\right) \nonumber \\
& = \phi\left(\sim^*, \sim^*, \begin{bmatrix}
    z \\ y, Y \\ x
\end{bmatrix}\right) \nonumber \\ 
& = \sim^* 
\end{align}
where the first line follows by strong representative consistency, the second by equation \ref{eq2.1} and neutrality, and the third by equation \ref{eq2.7}.

Consider the same aggregation with a different partition:

\begin{align*}
    \phi\left(\begin{bmatrix}
    z \\ x, Y \\ y
\end{bmatrix}, \begin{bmatrix}
    y \\ x, Y \\ z
\end{bmatrix}, \begin{bmatrix}
    z \\ y, Y \\ x
\end{bmatrix}\right) 
& = \phi\left(\begin{bmatrix}
    z \\ x, Y \\ y
\end{bmatrix}, \phi\left(\begin{bmatrix}
    y \\ x, Y \\ z
\end{bmatrix}, \begin{bmatrix}
    z \\ y, Y \\ x
\end{bmatrix}\right), \phi\left(\begin{bmatrix}
    y \\ x, Y \\ z
\end{bmatrix}, \begin{bmatrix}
    z \\ y, Y \\ x
\end{bmatrix}\right)\right) \\
& = \phi\left(\begin{bmatrix}
    z \\ x, Y \\ y
\end{bmatrix}, \begin{bmatrix}
    y \\ z, Y \\ x
\end{bmatrix}, \begin{bmatrix}
    y \\ z, Y \\ x
\end{bmatrix}\right) \\
& = \sim^*
\end{align*}

where the first line again follows by strong representative consistency, the second by equation \ref{eq2.5} and neutrality, and the third by equation \ref{eq2.8}.

However, the last line violates weak Pareto, as $z \succ_i x$ for $i = 1,2,3$, but society is indifferent.

\end{proof}

\subsection{}

Suppose $|X| = 2$, specifically that $X = \{x,y\}$. There are three vNM preferences: $x \succ y$, $x \prec y$, and $x \sim y$. Abusing notation, denote these three preferences $x$, $y$, and $\sim$, respectively.

Let the \emph{indifference priority rule} $\phi_{\sim}$ be defined: for any $J \in \mathcal{I}$ and $\succsim_J \in \mathcal{V}^J$,
\begin{align*}
    \phi_{\sim}(\succsim_J) = \begin{cases}
        x & \succsim_i = x \text{ for all } i \in J \\
        y & \succsim_i = y \text{ for all } i \in J \\
        \sim & \text{otherwise}
    \end{cases}
\end{align*}


\begin{proposition}
    Suppose $|X| = 2$. The indifference priority rule $\phi_\sim$ satisfies representative consistency, the Pareto principle, anonymity, and neutrality 
\end{proposition}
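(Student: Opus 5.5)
The plan is to check each of the four axioms directly, using the fact that with $X=\{x,y\}$ there are only three vNM preferences. Every lottery is then $px+(1-p)y$, and each preference is pinned down by how it ranks $x$ against $y$. Throughout I call a profile \emph{unanimous-$x$} if every member has preference $x$, and similarly \emph{unanimous-$y$}.

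\textbf{Anonymity and neutrality.} Anonymity holds because $\phi_\sim(\succsim_J)$ depends only on whether all members hold $x$, all hold $y$, or neither. That condition is invariant under any bijection $\sigma: J\to K$. For neutrality, the only nontrivial permutation $\rho$ swaps $x$ and $y$. It maps the preference $x$ to $y$, maps $y$ to $x$, and fixes $\sim$. Hence it maps unanimous-$x$ profiles to unanimous-$y$ profiles and vice versa, and maps mixed profiles to mixed profiles. The definition of $\phi_\sim$ commutes with this relabeling.

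\textbf{Pareto principle.} Take lotteries $P=px+(1-p)y$ and $Q=qx+(1-q)y$.

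For weak Pareto, if $P\succ_i Q$ for all $i$, then no individual is indifferent. If $p>q$, every $i$ must hold preference $x$, so the profile is unanimous-$x$ and $\phi_\sim$ returns $x$, which strictly prefers $P$. The case $p<q$ is symmetric, with everyone holding $y$. The case $p=q$ cannot arise, since then $P=Q$ and no one strictly prefers $P$ to $Q$.

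For Pareto indifference, if $P\sim_i Q$ for all $i$, then either $p=q$, in which case any social preference is indifferent, or every $i$ holds $\sim$. In the latter case the profile is neither unanimous-$x$ nor unanimous-$y$ (since $J$ is nonempty), so the output is $\sim$.

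\textbf{Representative consistency.} This is the only step requiring thought. Fix $J$ and a partition $\{K_1,\dots,K_m\}$, and compare $\phi_\sim(\succsim_J)$ with the two-stage outcome.

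If $\succsim_J$ is unanimous-$x$, each block is unanimous-$x$, so each block returns $x$. The replicated profile is then unanimous-$x$, and the second stage outputs $x$, matching the one-stage result. The unanimous-$y$ case is symmetric.

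Otherwise $\phi_\sim(\succsim_J)=\sim$, and I must show the second stage also yields $\sim$. It suffices to show the replicated profile is not unanimous-$x$ and not unanimous-$y$. Suppose it were unanimous-$x$. Then every block output equals $x$, which by definition of $\phi_\sim$ requires every block to be unanimous-$x$. That makes $\succsim_J$ unanimous-$x$, a contradiction; the $y$ case is the same.

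The key observation is that $\phi_\sim$ outputs $x$ exactly when its input is unanimously $x$. So "unanimous-$x$" is preserved in both directions through aggregation, and this settles the proposition.
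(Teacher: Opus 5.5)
Your verification is correct and complete. The paper itself omits this demonstration (``We omit the demonstration''), so your axiom-by-axiom check supplies exactly the routine argument it leaves to the reader.

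The decisive fact is the one you isolate: $\phi_\sim$ returns $x$ (resp.\ $y$) if and only if the profile is unanimously $x$ (resp.\ $y$). In the representative-consistency step you also use, without saying so, that each block $K_l$ of the partition is nonempty. That is what lets a unanimous-$x$ second-stage profile force every block output, and hence every block, to be unanimous-$x$.

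There is a more structural way to see the same thing, in the spirit of the paper's footnote appealing to Chambers' characterization. $\phi_\sim$ is the partial priority (meet) rule for the semilattice on $\{x,y,\sim\}$ with $\sim$ below both $x$ and $y$. Representative consistency then follows because the meet of a union equals the meet of the blocks' meets, and replicating values does not change a meet. Your direct check is just as short here.
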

We omit the demonstration. We can provide the converse if we assume $I$ is finite.
\begin{proposition}
    Suppose $|X| = 2$ and $I$ is finite. A social welfare function $\phi$ satisfies representative consistency, the Pareto principle, anonymity, and neutrality if and only if it's the indifference priority rule.
\end{proposition}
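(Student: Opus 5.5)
The ``if'' direction is the preceding proposition, so I only need the converse. Fix $\phi$ satisfying the four axioms with $X=\{x,y\}$ and $I$ finite, and use the same notation $x$, $y$, $\sim$ for the three preferences. By anonymity, $\phi(\succsim_J)$ depends only on the multiset of preferences in $\succsim_J$. The Pareto principle gives unanimity: an all-$x$ profile maps to $x$ and an all-$y$ profile to $y$ by weak Pareto, and an all-$\sim$ profile maps to $\sim$ by Pareto indifference. So, as in the proof of Theorem \ref{thm2}, representative consistency upgrades to strong representative consistency. It then remains to show that every \emph{mixed} profile maps to $\sim$. As in Theorem \ref{thm1}, anonymity and neutrality (via the swap $\rho$ of $x$ and $y$) give $\phi(x,y)=\sim$.

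\textbf{Key step: $\phi(x,\sim)=\sim$.} Suppose instead $\phi(x,\sim)=x$, so that $\phi(y,\sim)=y$ by neutrality. Then $\phi(\sim,\sim,y)=\phi(\sim,y,y)$, obtained by replacing the pair $\{\sim,y\}$ by $y,y$. Next, $\phi(\sim,y,y)=\phi(y,y,y)=y$, obtained by replacing the pair $\{\sim,y\}$ inside it. Now evaluate $\phi(x,y,\sim)$ in two ways.
\begin{itemize}
\item Merging $\{y,\sim\}$ gives $\phi(x,y,y)$. Merging $\{x,y\}$ in that profile gives $\phi(\sim,\sim,y)=y$.
\item Merging $\{x,\sim\}$ gives $\phi(x,y,x)$. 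By the mirror-image computation, this equals $x$.
\end{itemize}
Since $y\neq x$, this is a contradiction. (The route $\phi(\sim,\sim,\sim)$ from merging $\{x,y\}$ also yields a third value.)

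The remaining possibility $\phi(x,\sim)=y$, with $\phi(y,\sim)=x$ by neutrality, is the step I expect to be the main obstacle. Three-agent chains of merges seem only to produce consistent relations such as $\phi(\sim,\sim,y)=\phi(x,x,\sim)$. I would look for a contradiction in four-agent profiles such as $(x,\sim,\sim,\sim)$ or $(x,x,\sim,\sim)$, comparing the many ways of merging pairs and triples. The aim is to find a partition whose aggregation yields a profile with unanimous strict preference but the opposite social ranking, violating weak Pareto. Failing that, I would look for one partition leading to $x$ and another leading to $y$. Since $|I|\ge 3$ is all that is assumed, I would first check whether a three-agent contradiction exists before using four agents.

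\textbf{Induction on mixed profiles.} Given $\phi(x,\sim)=\phi(y,\sim)=\phi(x,y)=\sim$, take any finite mixed profile. By strong representative consistency, I may replace any two-agent block $\{x,y\}$, $\{x,\sim\}$ or $\{y,\sim\}$ by $\sim,\sim$ without changing the social preference. Each such replacement strictly decreases the number of agents with strict preferences, while the profile stays mixed. This holds until every agent is indifferent, and the all-$\sim$ profile maps to $\sim$ by unanimity. Hence $\phi(\succsim_J)=\sim$ for every mixed $\succsim_J$, which together with unanimity is exactly $\phi_\sim$. Finiteness of $I$ is used here, since the induction on the number of strict agents needs every $J$ to be finite.
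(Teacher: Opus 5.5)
\textbf{Gap: the case $\phi(x,\sim)=y$ is never closed.} Your proof leaves open the case $\phi(x,\sim)=y$ (hence $\phi(y,\sim)=x$). Without it, the key step $\phi(x,\sim)=\sim$ is not established, so your induction on mixed profiles has nothing to start from. Your guess that three-agent merges only produce consistent relations is also wrong. The profile $(x,y,\sim)$, which you already used in the other case, closes this case in one line:
\begin{itemize}
\item Merging the block $\{x,\sim\}$ replaces it by $(y,y)$, so $\phi(x,y,\sim)=\phi(y,y,y)=y$ by unanimity.
\item Merging $\{x,y\}$ gives $\phi(\sim,\sim,\sim)=\sim$. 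Equivalently, anonymity and neutrality force $\phi(x,y,\sim)=\sim$ directly, because swapping $x$ and $y$ sends the profile to a permutation of itself.
\item Merging $\{y,\sim\}$ even gives a third value, $x$.
\end{itemize}
This is exactly the paper's Case 1. No four-agent profile is needed; three agents suffice, consistent with the standing assumption $|I|\ge 3$.

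\textbf{The rest of your argument is correct.} In the case $\phi(x,\sim)=x$, you reach a contradiction by comparing two merge routes, one giving $y$ and the other $x$. The paper instead follows the single chain
\[
\phi(x,\sim,y)=\phi(x,x,y)=\phi(x,\sim,\sim)=\phi(x,x,\sim)=\phi(x,x,x)=x
\]
and compares it with the value $\sim$ forced by anonymity and neutrality. Either route works. Your final reduction repeatedly replaces a discordant pair by $(\sim,\sim)$ until the profile is all-$\sim$. This is the same idea as the paper's last paragraph and uses finiteness of $I$ correctly.
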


\begin{proof}\footnote{We could also apply theorem 1 from \cite{chambers2008consistent}, since $|\mathcal{V}|$ is finite.}
    By the Pareto principle, if $\succsim_i = x$ for all $i \in J$ ($\succsim_J$ is unanimously $x$), $\phi(\succsim_J) = x$ (similarly for $y$ and $\sim$).
    
    By anonymity and neutrality, $\phi(x,y) = \sim$, and $\phi(x,\sim, y) = \sim$. 

    Assume towards a contradiction that $\phi(x,\sim) \neq \sim$. Case 1: $\phi(x,\sim) = y$. Then by strong representative consistency, $\phi(x,\sim, y) = \phi(\phi(x,\sim),\phi(x,\sim), y) = \phi(y,y,y) = y$, which contradicts that $\phi(x, \sim, y) = \sim$. Case 2: $\phi(x,\sim) = x$. Then by strong representative consistency,
    \begin{align*}
        \phi(x,\sim, y) & = \phi(\phi(x,\sim),\phi(x,\sim), y) = \phi(x, x, y) \\
        & = \phi(x, \phi(x,y), \phi(x,y)) = \phi(x, \sim, \sim) \\
        & = \phi(\phi(x, \sim), \phi(x, \sim), \sim) = \phi(x,x,\sim) \\
        & = \phi(x, \phi(x, \sim), \phi(x, \sim)) = \phi(x,x,x) = x
    \end{align*}
    which contradicts that $\phi(x,\sim, y) = \sim$. So $\phi(x,\sim) = \sim$. Similarly $\phi(y, \sim) = y$.

    Then, for $\succsim_J$ that is not unanimously $x$ or unanimously $y$, there exists $i, j \in J$ for whom $\phi(\succsim_i, \succsim_j) = \sim$. Then, by strong representative consistency and anonymity $\phi(\succsim_J) = \phi(\phi(\succsim_i, \succsim_j)^{\{i,j\}}, \succsim_{J \backslash \{i,j\}}) = \phi(\sim, \sim,  \succsim_{J \backslash \{i,j\}}) = \phi(\sim, \phi(\sim, \succsim_k), \phi(\sim, \succsim_k), \succsim_{J \backslash \{i,j,k\}}) = \phi(\sim, \sim, \sim, \succsim_{J \backslash \{i,j,k\}}) ... \phi(\sim, \sim, \dots, \sim) = \sim$, which terminates since $I$ is finite.
\end{proof}

\subsection{Proof of Proposition \ref{prop1}}

\begin{proof}
    The case where $|X| = 2$ follows from the first case of theorem \ref{thm1}.

    The case where $|X| \geq 3$ follows from the proof of theorem \ref{thm2}, since strong Pareto implies the Pareto principle.
\end{proof}